\documentclass[conference]{IEEEtran}
\IEEEoverridecommandlockouts

\usepackage{cite}
\usepackage{amsmath,amssymb,amsfonts}
\usepackage{amsthm}
\usepackage{algorithmic}
\usepackage{graphicx}
\usepackage{textcomp}
\usepackage{xcolor}
\usepackage[linesnumbered,ruled,vlined,noend]{algorithm2e}
\usepackage{enumitem}
\usepackage{multirow}
\usepackage{subcaption}
\usepackage{xspace}
\usepackage{booktabs}
\usepackage{hyperref}

\newcommand{\spara}[1]{\smallskip\noindent{\bf #1}}
\newcommand{\RAM}{\textsf{RAM}}
\newcommand{\BGD}{\textsf{BGD}}

\newcommand{\MGA}{\textsf{MGA}}

\newcommand{\single}{\textsc{SingleLevelGreedy}\xspace}

\newtheorem{theorem}{Theorem}
\newtheorem{definition}{Definition}
\newtheorem{problemDefinition}{Problem Definition}
\newtheorem{property}{Property}
\newtheorem{proposition}{Proposition}
\newtheorem{example}{Example}
\begin{document}

\title{Representation-Aware Modularity with a Distance-Calibrated Null Model}

%

\author{
\IEEEauthorblockN{Taejoon Han}
\IEEEauthorblockA{
\textit{Ulsan National Institute of}\\
\textit{Science and Technology}\\
Ulsan, Republic of Korea\\
taejoon.han@unist.ac.kr
}
\and
\IEEEauthorblockN{Dahee Kim}
\IEEEauthorblockA{
\textit{Ulsan National Institute of}\\
\textit{Science and Technology}\\
Ulsan, Republic of Korea\\
dahee@unist.ac.kr
}
\and
\IEEEauthorblockN{Junghoon Kim\textsuperscript{*}}
\IEEEauthorblockA{
\textit{Ulsan National Institute of}\\
\textit{Science and Technology}\\
Ulsan, Republic of Korea\\
junghoon.kim@unist.ac.kr
}
\and
\IEEEauthorblockN{Taeho Ha}
\IEEEauthorblockA{
\textit{Soonchunhyang University}\\
Asan, Republic of Korea\\
hth1690@sch.ac.kr
}
}





%

\maketitle

\begingroup
\renewcommand{\thefootnote}{*}
\footnotetext{Corresponding author}
\endgroup

\begin{abstract}
Modularity is a standard objective for community detection, comparing observed intra-community connectivity with its null-model expectation. Modern node embeddings capture structural and semantic proximity beyond raw topology, but replacing adjacency with representation-induced affinity while retaining the classical degree-based null model makes the observed term geometry-dependent and the expected term geometry-agnostic. This creates a null-model mismatch: the objective may reward affinity already expected from representation proximity. This paper introduces {\RAM}, a representation-aware modularity objective with a distance-calibrated null model. {\RAM} preserves the observed-minus-expected principle by calibrating expected affinity with both degree heterogeneity and representation distance. We analyse its relation to classical modularity and develop divisive and multi-level greedy optimisers. Experiments on seven real-world datasets and synthetic scalability tests show that {\RAM} improves over topology-only, direct embedding-clustering, and deep learning-based baselines, remains competitive under classical modularity while improving community quality on large graphs, and scales efficiently. Ablations confirm that both degree correction and distance calibration are essential.
\end{abstract}

\begin{IEEEkeywords}
Community Detection, Modularity
\end{IEEEkeywords}






\section{INTRODUCTION}

Community detection is a central problem in graph mining, data mining, and network science~\cite{jin2021survey}. Given a graph, the goal is to identify groups of nodes, called communities, such that nodes within the same group are more densely connected to each other than to nodes in other groups~\cite{fortunato2010community}. Among the many formulations proposed for this task, modularity~\cite{newman2004finding,newman2006finding} remains one of the most widely adopted. Classical modularity~\cite{newman2004finding} measures the difference between the observed fraction of intra-community edges and the fraction expected under a null model. Its statistical interpretability has made modularity a standard and well-studied objective function for unsupervised community detection~\cite{fortunato2010community}.

Recent advances in graph representation learning have made node representations a common input to graph mining tasks~\cite{hamilton2020graph}. Such representations can be obtained from graph neural networks~\cite{kipf2016semi}, random-walk embeddings~\cite{perozzi2014deepwalk, grover2016node2vec}, spectral methods~\cite{belkin2003laplacian}, or feature mappings derived from external data~\cite{hamilton2017inductive}. They capture information beyond raw adjacency, such as latent structural similarity, semantic proximity, and higher-order relational patterns~\cite{su2022comprehensive}. However, using representations in modularity optimisation is not merely a matter of replacing edges with representation-induced affinities. Classical modularity is meaningful because its observed term and expected term are defined over the same structural signal. Once the observed affinity is derived from node representations, it becomes dependent on the geometry of the representation space. The degree-based null model of classical modularity does not account for this geometry. This raises a key question: how can modularity preserve its observed-minus-expected semantics when the observed affinity is representation-dependent?

A naive representation-aware extension of modularity replaces the adjacency term with representation-induced affinity while retaining the classical degree-based expectation. This formulation is not fully aligned with the observed-minus-expected interpretation of modularity. If two nodes are close in the representation space, high affinity may be expected even without evidence of community structure beyond their geometric proximity. Conversely, an observed edge between distant representations may carry a different meaning from an edge between nearby representations. A degree-only null model does not explicitly separate these effects. Consequently, the resulting objective may reward affinity that is already expected from representation geometry, rather than affinity that exceeds a proper null-model baseline. We refer to this issue as a null-model mismatch in representation-aware modularity. The mismatch arises because the observed term is geometry-dependent, whereas the expected term remains geometry-agnostic. Therefore, representation-aware modularity motivates recalibrating the null model, rather than only redefining the observed affinity.

\begin{figure*}
\vspace{-0.5cm}
    \centering
    \includegraphics[width=0.99\textwidth]{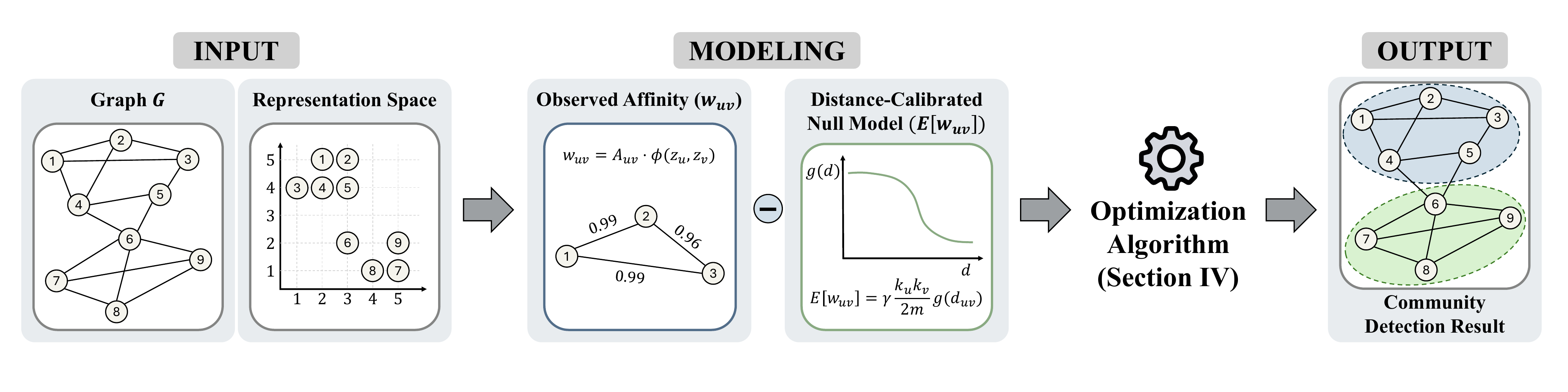}
    \vspace{-0.7cm}
    \caption{Overview of the proposed representation-aware modularity framework}
    \label{fig:overview}
    \vspace{-0.4cm}
\end{figure*}

To address this mismatch, we propose \textit{representation-aware modularity} ({\RAM}), a modularity objective with a distance-calibrated null model. The key idea is to preserve the observed-minus-expected principle of classical modularity under representation-dependent affinity. {\RAM} defines observed affinity on graph-supported node pairs and calibrates the expected term by both degree heterogeneity and representation distance. More specifically, the null model explicitly adjusts the classical degree-based expectation using a distance-dependent calibration function $g(d_{uv})$, where $d_{uv}$ denotes the distance between the representations of nodes $u$ and $v$. The function $g(\cdot)$ acts as the expected similarity conditioned on representation distance, not an arbitrary correction factor. As a result, {\RAM} rewards within-community affinity only when it exceeds what is expected from both node degrees and representation distance.

This formulation positions {\RAM} as an extension of classical modularity, rather than as an embedding-based clustering method. Under the special case where representation-induced affinity reduces to graph adjacency and $g(\cdot)$ is constant across all distances, {\RAM} reduces to classical modularity. Thus, {\RAM} remains compatible with the classical objective while extending its null-model principle to representation-aware community detection.

Figure~\ref{fig:overview} summarises the {\RAM} pipeline: representation-induced affinity defines the observed term, while the distance-calibrated null model defines the corresponding representation-aware expected term.

Optimising {\RAM} is non-trivial. The objective inherits the combinatorial difficulty of modularity maximisation, and the calibrated expected term introduces additional computational challenges. To address this, we develop two strategies. First, \textbf{Betweenness-Guided Divisive Search Algorithm} ({\BGD}) is a top-down approach that iteratively partitions the graph by removing high betweenness edges and evaluates each resulting partition using {\RAM}. Second, \textbf{Multi-Level Greedy Aggregation Algorithm} ({\MGA}) is a hierarchical bottom-up approach that alternates between node-level local moves and contracting communities into super-nodes.

The main contributions of this paper are as follows:
\begin{itemize}[leftmargin=*, topsep=0pt]
    \item We identify a null-model mismatch in representation-aware modularity and formulate {\RAM}, a distance-calibrated objective with a representation-aware expected term.
    \item We preserve the observed-minus-expected principle by calibrating expected affinity with both degree heterogeneity and representation distance.
    \item We instantiate {\RAM} with two optimisation strategies: a divisive search method and a Louvain-style multi-level optimiser with {\RAM}-specific gain updates and centroid summaries.
\end{itemize}
\section{REPRESENTATION-AWARE MODULARITY}
In this section, we formally introduce representation-aware modularity ({\RAM}), a modularity formulation that extends classical modularity by incorporating node representations while preserving the observed-minus-expected principle.

\subsection{Preliminaries}
Let $G = (V, E)$ be an undirected graph, where $V$ denotes the set of nodes with $n = |V|$ and $E$ denotes the set of edges with $m = |E|$. For a node $u\in V$, let $N(u)=\{v\in V \mid (u,v)\in E\}$ denote the set of its neighbours and $k_u=|N(u)|$ represent its degree.

\spara{Classical modularity.}
Community detection aims to partition $V$ into a set of communities $\mathcal{C} = \{C_1, \dots, C_k\}$. Classical modularity~\cite{newman2004finding} evaluates a partition by comparing the observed intra-community connectivity against the expected connectivity under the null model:
\begin{align*}
Q(\mathcal{C}) = \frac{1}{2m} \sum_{C \in \mathcal{C}} \sum_{u,v \in C} \left( A_{uv} - \gamma\frac{k_u k_v}{2m} \right)
\end{align*}
, where $A_{uv}$ is the $(u,v)$ entry of the adjacency matrix $A$, $\frac{k_u k_v}{2m}$ 
is the expected number of edges between $u$ and $v$ under the null model~\cite{newman2004finding}, and $\gamma > 0$ is a resolution parameter controlling the granularity of the resulting  partition~\cite{reichardt2006statistical}. A partition is considered high-quality when the observed connectivity exceeds the expectation, yielding a large positive surplus.

\spara{Node representation.}
Each node $v \in V$ is associated with a $d$-dimensional vector $z_v \in \mathbb{R}^d$ in a representation space, collectively denoted by $\mathcal{Z}=\{z_v\}_{v\in V}$.
These representations can be obtained from various sources, such as graph embedding models~\cite{perozzi2014deepwalk,grover2016node2vec}, spectral methods~\cite{belkin2003laplacian}, or feature mappings derived from external data~\cite{hamilton2017inductive}.
Our framework treats the representations as input and is compatible with different representation learning pipelines.

Our primary objective is to extend the classical modularity framework to graphs equipped with node representations. To this end, we redefine the concept of intra-community surplus to account for both observed structural connectivity and geometric proximity in the representation space, while preserving the observed-minus-expected principle through a distance-calibrated null model.

\subsection{Edge Affinity Based on Representations}

\spara{Motivation.}
As discussed above, evaluating intra-community surplus requires considering both structural connectivity and geometric proximity. Relying solely on edge existence ignores the geometric proximity between nodes in the representation space. Therefore, we introduce a representation-induced edge affinity that assigns a weight to each observed edge based on the geometric proximity. This enables a more informative measure of intra-community connectivity that reflects both topology and representation.

\spara{Construction.}
To incorporate representation information while preserving the edge structure of the graph, we define a representation-induced affinity only on observed edges $(u,v) \in E$:
\begin{align*}
    w_{uv} = A_{uv} \cdot \phi(z_u, z_v)
\end{align*}
, where $A_{uv}$ is the $(u,v)$ entry of the adjacency matrix $A$, and $\phi(\cdot,\cdot)$ is a non-negative similarity function on the representation space.

We restrict affinity to observed edges by setting $w_{uv} = 0$ for $(u,v) \notin E$. This prevents node pairs from appearing connected solely based on their proximity in the representation space, keeping the observed affinity graph-supported.
Let $2m_w = \sum_{u, v \in V} w_{uv}$ denote the total affinity mass of the graph. Unlike the structural edge mass $2m=\sum_{u,v}A_{uv}$, which depends purely on topology, $2m_w$ incorporates representation-induced affinities on observed edges.

\subsection{Distance-Calibrated Null Model}

\textbf{Motivation.}
Classical modularity uses a degree-based null model, which is appropriate when community quality is evaluated purely from graph topology.
In our framework, however, the observed affinity $w_{uv}=A_{uv}\phi(z_u,z_v)$ also depends on representation geometry, so a degree-only null model cannot capture geometry-induced variation in affinity.

\spara{Construction.}
To evaluate the expected affinity, the null model must account for structural and geometric factors. We factorise it into two components: a degree-based term inherited from classical modularity, and a distance-based term that captures the expected similarity induced by representation geometry. This product form should be understood as a degree-corrected calibration baseline, rather than an independence assumption between degree and representation geometry. The degree term $\frac{k_u k_v}{2m}$ determines the interaction opportunity assigned by the null model, while the calibration function $g(d_{uv})$, with $d_{uv}=\|z_u-z_v\|$, determines the expected affinity level at the corresponding representation distance.
Combining the two, the expected affinity is defined as: $\mathbb{E}[w_{uv}] \;=\; \gamma\,\frac{k_u k_v}{2m}\,g(d_{uv})$\label{eq:null-term}, where $\gamma > 0$ is a standard resolution parameter.

Formally, $g(d_{uv})$ is the expected similarity at representation distance $d_{uv}$ conditioned on the chosen similarity function $\phi$:
\begin{align*}
    g(d_{uv}) \;=\; \mathbb{E}[\phi(z_u, z_v) \big|\; \|z_u - z_v\| = d_{uv},\; u \neq v]
\end{align*}
The role of $g(\cdot)$ depends on the choice of similarity function $\phi$.
When $\phi$ captures directional information (e.g., cosine similarity), two node pairs at the same distance may have different similarity values depending on the directions of their embedding vectors.
In this case, $g(\cdot)$ becomes a genuine conditional expectation: it averages out the directional variation and captures how similarity typically changes with distance.
 
\spara{Interpretation of the null term.} Unlike classical modularity, the calibrated null mass in {\RAM} is not required to sum to $2m_w$. The degree-based null model conserves total expected edge mass because node degrees are fixed and $\sum_u k_u=2m$. By contrast, representation proximity is not a conserved quantity: many node pairs can simultaneously be close in the representation space. Thus, $\sum_{u,v}\mathbb{E}[w_{uv}]$ need not match the observed affinity mass. This reflects the role of $g(\cdot)$ as a distance-conditioned expected similarity, not a mass-preserving baseline.

\subsection{Estimation of $g(d)$.}
Computing $g(\cdot)$ exactly requires knowledge of the conditional distribution of $\phi$, which is generally intractable in practice. We therefore estimate $g(\cdot)$ empirically from sampled node pairs.

We sample a set $\mathcal{S}$ of node pairs $(u,v)$ uniformly at random with $u \neq v$, and compute their distances $d_{uv} = \|z_u - z_v\|$ and similarities $\phi(z_u, z_v)$. We partition the observed distance range into $L$ bins $\{\mathcal{B}_1, \dots, \mathcal{B}_L\}$ with centres $\{c_1, \dots, c_L\}$, and for each bin compute the average similarity:
\begin{align*}
    \hat{g}_\ell
    = \frac{1}{|\mathcal{B}_\ell|}
      \sum_{(u,v) \in \mathcal{B}_\ell} \phi(z_u, z_v)
\end{align*}
Each $\hat{g}_\ell$ serves as the representative similarity at the bin centre $c_\ell$. For any query distance $d$ that falls between two adjacent bin centres $c_\ell$ and $c_{\ell+1}$, we estimate $g(d)$ by linearly interpolating $\hat{g}_\ell$ and $\hat{g}_{\ell+1}$. This yields a piecewise linear estimator $\hat{g}(d)$, which is $L_{\hat g}$-Lipschitz, where $L_{\hat g}$ is its maximum slope.

\subsection{Representation-Aware Modularity (\RAM)}\label{sec:ram}
By combining the representation-induced edge affinity with the distance-calibrated null model, we define a new modularity formulation for graphs equipped with node representations.

\begin{definition}[Representation-Aware Modularity (\RAM)]
\label{def:ram}
Given a graph $G=(V,E)$ with node representations $\mathcal{Z}$ and a partition $\mathcal{C} = \{C_1, \dots, C_k\}$ of $V$, the representation-aware modularity is defined as
\begin{align*}
Q_{\mathrm{RA}}(\mathcal{C}) = \frac{1}{2m_w} \sum_{C \in \mathcal{C}} \sum_{u,v \in C} \left( w_{uv} - \gamma \frac{k_u k_v}{2m} g(d_{uv}) \right)
\end{align*}
\end{definition}

The inner summation ranges over all ordered node pairs within each community. Since the graph is undirected, each internal edge is counted twice, which correctly aligns with the global scaling factor $\frac{1}{2m_w}$. The product form in the expected term decomposes the null model into two factors: $k_u k_v/(2m)$ is the expected number of edges between $u$ and $v$ under the degree-based null model of classical modularity, while $g(d_{uv})$ captures the expected similarity at representation distance $d_{uv}$. This factorisation does not assume that topology and representation geometry are independent in the observed graph; rather, it defines a baseline determined by degree heterogeneity and distance-conditioned similarity, against which the observed graph-supported affinity is evaluated.

For notational convenience, let $p_{uv} = \gamma \frac{k_u k_v}{2m} g(d_{uv})$ denote the distance-calibrated null term. We define the observed and expected affinity sums of a community $C$ as $w(C) = \sum_{u,v \in C} w_{uv}$ and $p(C) = \sum_{u,v \in C} p_{uv}$, respectively. $Q_{\mathrm{RA}}(\mathcal{C})$ can be written equivalently as
\begin{align*}
Q_{\mathrm{RA}}(\mathcal{C}) = \frac{1}{2m_w} \sum_{C \in \mathcal{C}} \bigl( w(C) - p(C) \bigr)
\end{align*}

Intuitively, $Q_{\mathrm{RA}}(\mathcal{C})$ measures the surplus of observed affinity against the calibrated baseline. As a result, this formulation favours intra-community affinity that exceeds a baseline determined by node degrees and representation distance.

\spara{Comparison with a naive representation-aware objective.}
A natural alternative is to replace graph adjacency by representation-induced affinity while keeping the degree-based expected term, scaled by a global average affinity. This leads to a naive representation-aware modularity:
\begin{align*}
Q_{\mathrm{naive}}(\mathcal{C}) = \frac{1}{2m} \sum_{C \in \mathcal{C}} \sum_{u,v \in C} \left[ A_{uv}\cdot\phi(z_u,z_v) - \frac{k_u k_v}{2m}\bar{\phi} \right]
\end{align*}
, where $\bar{\phi}$ denotes the global average affinity. Since the expected term is insensitive to representation distance, this objective may reward high-affinity edges between nearby representations even when such affinity is already expected because the two endpoints are close in representation space.

In contrast, {\RAM} replaces the global average affinity $\bar{\phi}$ with the calibration function $g(\cdot)$. The difference is therefore not merely whether representations are used, but where they enter the observed-minus-expected principle: {\RAM} uses representations in both terms, defining the observed affinity and calibrating the expected affinity. This allows the objective to reward affinity that exceeds what is expected at the corresponding representation distance.

The following property shows that {\RAM} recovers classical modularity as a special case.

\begin{property}[Reduction to Classical Modularity]\label{prop:reduction}
If $\phi(z_u, z_v)=1$ for all $(u,v)\in E$ and $g(d)=1$ for all distances $d$, then {\RAM} reduces to classical modularity with resolution parameter $\gamma$.\end{property}
\begin{proof}

Under $\phi(z_u, z_v)=1$ for all $(u,v)\in E$, we have $w_{uv}=A_{uv}$ for all $u, v \in V$. Consequently, the total affinity mass aligns with the structural edge mass: $2m_w = \sum_{u,v \in V} w_{uv} = \sum_{u,v \in V} A_{uv} = 2m$. Moreover, under $g(d)=1$, the distance-calibrated null term simply becomes the degree-based expectation $\gamma \frac{k_u k_v}{2m}$. Substituting these directly into the definition of $Q_{\mathrm{RA}}(\mathcal{C})$ yields
\begin{align*}
Q_{\mathrm{RA}}(\mathcal{C}) = \frac{1}{2m} \sum_{C \in \mathcal{C}} \sum_{u,v \in C} \left( A_{uv} - \gamma \frac{k_u k_v}{2m} \right)
\end{align*}
, which is the classical modularity with resolution parameter $\gamma$.
\end{proof}
\section{PROBLEM STATEMENT}

Having defined the {\RAM} objective function, we formalise the associated optimisation problem. Given a graph and its node representations, our goal is to find a partition that maximises {\RAM}. As in classical modularity, the number of communities is not specified in advance, but naturally emerges from the optimisation process.

\begin{problemDefinition}[Representation-Aware Modularity Maximisation]\label{problem:RAMM}
Given an undirected graph $G=(V,E)$ and node representations $\mathcal{Z}$, the objective is to find an optimal partition $\mathcal{C}^*=\{C_1, \dots, C_k\}$ of $V$ such that $\mathcal{C}^* = \arg\max_{\mathcal{C}} \; Q_{\mathrm{RA}}(\mathcal{C})$. The partition must satisfy $C_i \cap C_j = \emptyset$ for $i \neq j$ and $\bigcup_{i=1}^k C_i = V$.
\end{problemDefinition}

Note that the number of communities $k$ is not predefined and emerges from optimisation.

\begin{example}
\label{ex:ram}
Consider the toy graph in Figure~\ref{fig:overview}. We compare $\mathcal{C}_1=\{\{1,\ldots,5\},\{6,7,8,9\}\}$ and $\mathcal{C}_2=\{\{1,\ldots,6\},\{7,8,9\}\}$. With $\gamma=1$, cosine similarity, and $\hat{g}$ estimated from all $\binom{9}{2}=36$ pairs using $L=5$ equal-width bins, {\RAM} gives $Q_{\mathrm{RA}}(\mathcal{C}_1)\approx 0.470$ and $Q_{\mathrm{RA}}(\mathcal{C}_2)\approx 0.391$. Although node~$6$ is near the first group in the representation space, adding it creates several intra-community non-edge pairs, such as $(1,6)$, $(2,6)$, and $(3,6)$. These pairs contribute nothing to the observed affinity $w(C)$ because affinity is graph-supported, but they still increase the null term $p(C)$. Hence, the surplus $w(C)-p(C)$ decreases, and {\RAM} favours the partition with stronger graph-supported affinity over the distance-calibrated baseline.
\end{example}

We next analyse two theoretical aspects of {\RAM}: the computational complexity of its maximisation problem, and the stability of the objective under estimation error of the calibration function $g(\cdot)$.

\begin{theorem}\label{thm:np}
Representation-aware modularity maximisation problem is $NP$-hard.
\end{theorem}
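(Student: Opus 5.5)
We reduce from classical modularity maximisation, which is NP-hard (Brandes et al.); we use the decision version ``is there a partition with $Q(\mathcal{C})\ge K$?''. Given an instance $G=(V,E)$, $K$, and resolution $\gamma$ (the hardness proof uses $\gamma=1$), we build an instance of Problem~\ref{problem:RAMM} on the same graph. We choose the representations and similarity so that the hypotheses of Property~\ref{prop:reduction} hold. A simple choice is to give every node the same vector $z_v=\mathbf{0}\in\mathbb{R}^d$ and to take $\phi\equiv 1$, for example the constant similarity, or cosine similarity on identical nonzero vectors.

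With this choice, $w_{uv}=A_{uv}$ for every pair. Every distance satisfies $d_{uv}=0$. The calibration function is $g(0)=\mathbb{E}[\phi\mid d=0]=1$, because $\phi\equiv 1$ on all pairs. The same holds for the empirical estimator: every sampled pair falls in one bin with average $1$, so $\hat g\equiv 1$. Property~\ref{prop:reduction} then gives $Q_{\mathrm{RA}}(\mathcal{C})=Q(\mathcal{C})$ for every partition $\mathcal{C}$. Hence the maximisers coincide, and $\max Q_{\mathrm{RA}}\ge K$ if and only if $\max Q\ge K$. The construction takes $O(n+m)$ time, so it is polynomial. Any polynomial algorithm for {\RAM} maximisation would therefore solve classical modularity maximisation in polynomial time, which proves Theorem~\ref{thm:np}.

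The main obstacle is making the reduction legitimate, not the algebra. Two things must hold: the constructed instance must belong to the problem class, and $g$ must actually equal $1$, since $g$ is a conditional expectation rather than a free parameter. If degenerate identical embeddings or a constant $\phi$ are considered inadmissible, I would instead take $\phi$ to be cosine similarity and place all nodes at a common nonzero vector $z_v=e_1$. Then $\phi=1$ on all pairs, all distances are $0$, and $g(0)=1$. Alternatively, one can perturb the vectors by tiny distinct amounts along $e_1$ while keeping them collinear. Cosine similarity is still exactly $1$, so $g\equiv 1$ on every observed distance and the argument goes through unchanged.

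I would also check that modularity remains NP-hard for the fixed $\gamma$ used. It suffices to note that Brandes et al.\ prove hardness at $\gamma=1$, and {\RAM} allows $\gamma=1$.
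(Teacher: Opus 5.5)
Your proof is correct and follows the paper's route: a reduction from classical modularity maximisation (Brandes et al.) via Property~\ref{prop:reduction}, with $\phi\equiv 1$, $g\equiv 1$ and $\gamma=1$. You also choose explicit identical (or collinear) representations so that $g\equiv 1$ actually holds as the defined conditional expectation; this is a worthwhile step the paper skips, since it simply sets $g(d)=1$ without checking that the setting is realisable by some input $\mathcal{Z}$.
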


\begin{proof}
We prove this by reduction from classical modularity maximisation, which is known to be $NP$-hard~\cite{brandes2006maximizing}. As established in Property~\ref{prop:reduction}, setting $\phi(z_u, z_v) = 1$, $g(d) = 1$, and $\gamma=1$ reduces {\RAM} exactly to the classical modularity formulation. Since classical modularity is a special case of $Q_{\mathrm{RA}}$, it follows that maximising $Q_{\mathrm{RA}}(\mathcal{C})$ is at least as hard, confirming its $NP$-hardness.
\end{proof}

\begin{theorem}[Inapproximability of {\RAM} maximisation]\label{thm:inappro}
For any constant $\rho > 0$, unless $P=NP$, there is no polynomial-time algorithm that returns, for every instance with $Q_{\mathrm{RA}}^* > 0$, a partition $\mathcal{C}$ satisfying $Q_{\mathrm{RA}}(\mathcal{C}) \ge \rho \cdot Q_{\mathrm{RA}}^*$, where $Q_{\mathrm{RA}}^*$ denotes the optimal maximum value of {\RAM}.
\end{theorem}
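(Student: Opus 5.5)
We would prove Theorem~\ref{thm:inappro} in the same way as Theorem~\ref{thm:np}, by embedding classical modularity maximisation as a special case and then invoking a known hardness-of-approximation result for it. The base result is the theorem of Dinh, Li and Thai (and, independently, the APX-hardness line following Brandes et al.~\cite{brandes2006maximizing}): for every constant $\rho>0$, unless $P=NP$, no polynomial-time algorithm guarantees a partition with classical modularity at least $\rho$ times the optimum on all instances whose optimum is positive. This holds already for $\gamma=1$. Their reduction is from the gap version of a suitable NP-hard problem and produces instances with $Q^*>0$.

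The key step is a value-preserving reduction. Given a classical instance $G=(V,E)$, we build the {\RAM} instance on the same graph. We assign representations $z_v$ (for example, all equal, or arbitrary vectors) and choose $\phi\equiv 1$ on $E$ together with the calibration $g\equiv 1$. This is consistent with the definition of $g$ as a conditional expectation of $\phi$, since a constant $\phi$ gives a constant conditional mean. We also take $\gamma=1$. By Property~\ref{prop:reduction}, $Q_{\mathrm{RA}}(\mathcal{C})=Q(\mathcal{C})$ for \emph{every} partition $\mathcal{C}$, not only the optimum. Hence $Q_{\mathrm{RA}}^*=Q^*$, and the positivity condition $Q_{\mathrm{RA}}^*>0$ holds exactly when $Q^*>0$. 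The construction is clearly polynomial.

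Now suppose an algorithm $\mathcal{A}$ achieved ratio $\rho$ for {\RAM} on all instances with $Q_{\mathrm{RA}}^*>0$. Running $\mathcal{A}$ on the constructed instance returns $\mathcal{C}$ with
\[
Q(\mathcal{C})=Q_{\mathrm{RA}}(\mathcal{C})\ge \rho\, Q_{\mathrm{RA}}^*=\rho\, Q^*,
\]
which is a $\rho$-approximation for classical modularity and contradicts the cited result unless $P=NP$. Because the identity between the two objectives holds partition by partition, the approximation ratio transfers with no loss. This is why the argument works for every constant $\rho$ rather than only for some fixed one.

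The main obstacle is not this transfer, which is routine, but pinning down the right base theorem. The hardness needed is the ``no constant-factor approximation on instances with positive optimum'' statement, which is stronger than NP-hardness or APX-hardness. I would cite Dinh--Li--Thai's result that classical modularity cannot be approximated within any positive constant factor unless $P=NP$. I would also check that its hard instances are unweighted simple graphs with $\gamma=1$, so that they fall within the special case of Property~\ref{prop:reduction}.

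A secondary point is whether the problem input treats $g$ as given or as induced by the representations. If $g$ must be the true conditional expectation (or the estimator $\hat g$), I would argue as follows. When $\phi\equiv 1$ on all pairs, every bin average $\hat g_\ell$ equals $1$. The interpolated $\hat g$ is therefore also identically $1$, and the reduction stands under either reading.
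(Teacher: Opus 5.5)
Your proposal is correct and matches the paper's proof: both embed classical modularity via Property~\ref{prop:reduction} with $\phi\equiv 1$, $g\equiv 1$, $\gamma=1$, so the two objectives agree partition by partition, and a constant-factor approximation for {\RAM} would then contradict the inapproximability result of Dinh, Li and Thai~\cite{dinh2015network}. Your extra checks are refinements the paper leaves implicit: that the hard instances are unweighted graphs with $\gamma=1$, and that the estimator gives $\hat g\equiv 1$ when $g$ is induced from the representations. One minor citation point is that Brandes et al.\ established NP-hardness, not APX-hardness.
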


\begin{proof}
Since classical modularity maximisation is a special case of {\RAM} by Property~\ref{prop:reduction}, any polynomial-time $\rho$-approximation algorithm for {\RAM} yields a polynomial-time $\rho$-approximation algorithm for classical modularity maximisation, contradicting its inapproximability for any constant $\rho>0$ unless $P=NP$~\cite{dinh2015network}.
\end{proof}

\begin{proposition}[Stability under estimation error of $g(d)$]
Let $Q_{\mathrm{RA}}^{(g)}(\mathcal{C})$ and $Q_{\mathrm{RA}}^{(\hat{g})}(\mathcal{C})$ denote the {\RAM} objectives computed with $g$ and $\hat{g}$, respectively. If $\sup_d |\hat{g}(d)-g(d)| \le \varepsilon$, then for any partition $\mathcal{C}=\{C_1,\dots,C_k\}$,
\begin{align*}
\bigl|Q_{\mathrm{RA}}^{(\hat{g})}(\mathcal{C})-Q_{\mathrm{RA}}^{(g)}(\mathcal{C})\bigr| \le \frac{\gamma\varepsilon}{4m\,m_w}\sum_{i=1}^k \mathrm{vol}(C_i)^2
\end{align*}
, where $\mathrm{vol}(C_i)=\sum_{u\in C_i} k_u$. In particular,
\begin{align*}
\bigl|Q_{\mathrm{RA}}^{(\hat{g})}(\mathcal{C})-Q_{\mathrm{RA}}^{(g)}(\mathcal{C})\bigr| \le \gamma\varepsilon \cdot \frac{m}{m_w}
\end{align*}
\end{proposition}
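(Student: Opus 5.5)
The plan is to observe first that the observed term $w(C)$ and the normalisation $2m_w$ do not involve $g$. Hence the difference of the two objectives comes only from the null term. For any partition $\mathcal{C}$, subtracting the two expressions from Definition~\ref{def:ram} gives
\begin{align*}
Q_{\mathrm{RA}}^{(\hat{g})}(\mathcal{C})-Q_{\mathrm{RA}}^{(g)}(\mathcal{C}) = -\frac{\gamma}{2m_w}\sum_{i=1}^k\sum_{u,v\in C_i}\frac{k_uk_v}{2m}\bigl(\hat{g}(d_{uv})-g(d_{uv})\bigr).
\end{align*}

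Next, take absolute values and apply the triangle inequality. Since $k_uk_v\ge 0$ and $|\hat{g}(d_{uv})-g(d_{uv})|\le\varepsilon$ for every pair, each inner sum is bounded by $\frac{\varepsilon}{2m}\sum_{u,v\in C_i}k_uk_v$. This double sum over ordered pairs factorises as $\bigl(\sum_{u\in C_i}k_u\bigr)^2=\mathrm{vol}(C_i)^2$. Collecting the constants $\frac{\gamma}{2m_w}\cdot\frac{\varepsilon}{2m}=\frac{\gamma\varepsilon}{4m\,m_w}$ yields the first bound.

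For the second bound, use that the volumes are non-negative and sum to $\sum_u k_u=2m$. Therefore
\[
\sum_i \mathrm{vol}(C_i)^2\le\Bigl(\sum_i\mathrm{vol}(C_i)\Bigr)^2=4m^2,
\]
and substituting this gives $\frac{\gamma\varepsilon}{4m\,m_w}\cdot 4m^2=\gamma\varepsilon\,\frac{m}{m_w}$.

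There is no real obstacle in this argument. The only points needing care are checking that the diagonal pairs $u=v$ are included consistently in both the sum and the factorisation into $\mathrm{vol}(C_i)^2$, and assuming $m_w>0$ so that the objective is well defined. The sup-norm hypothesis covers every distance $d_{uv}$, including $d_{uu}=0$, so the diagonal terms are handled by the same estimate.
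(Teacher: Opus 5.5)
Your proposal is correct and follows essentially the same argument as the paper: isolate the null-term difference, bound it termwise by $\varepsilon$, factorise $\sum_{u,v\in C_i}k_uk_v=\mathrm{vol}(C_i)^2$, and then use $\sum_i \mathrm{vol}(C_i)^2\le(2m)^2$. Your explicit remarks on the diagonal pairs $u=v$ (covered by the $\sup_d$ hypothesis at $d=0$) and on needing $m_w>0$ are sensible additions that the paper leaves implicit.
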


\begin{proof}
From the definition of {\RAM},
\begin{multline*}
Q_{\mathrm{RA}}^{(\hat{g})}(\mathcal{C}) - Q_{\mathrm{RA}}^{(g)}(\mathcal{C}) \\
= \frac{\gamma}{4m\,m_w} \sum_{i=1}^k \sum_{u,v \in C_i} k_u k_v \bigl(g(d_{uv}) - \hat{g}(d_{uv})\bigr)
\end{multline*}
Taking absolute values and using $|\hat{g}(d)-g(d)|\le \varepsilon$ gives
\begin{align*}
\bigl|Q_{\mathrm{RA}}^{(\hat{g})}(\mathcal{C})-Q_{\mathrm{RA}}^{(g)}(\mathcal{C})\bigr| \le \frac{\gamma\varepsilon}{4m\,m_w}\sum_{i=1}^k \sum_{u,v\in C_i} k_u k_v
\end{align*}
Since $\sum_{u,v\in C_i} k_u k_v = \bigl(\sum_{u\in C_i} k_u\bigr)^2 = \mathrm{vol}(C_i)^2$, the first bound follows. Also, $\sum_{i=1}^k \mathrm{vol}(C_i)^2 \le \bigl(\sum_{i=1}^k \mathrm{vol}(C_i)\bigr)^2 = (2m)^2$, which yields
\begin{align*}
\bigl|Q_{\mathrm{RA}}^{(\hat{g})}(\mathcal{C})-Q_{\mathrm{RA}}^{(g)}(\mathcal{C})\bigr| \le \gamma\varepsilon \cdot \frac{m}{m_w} 
\end{align*}
Therefore, the proposition is proved.
\end{proof}
\section{ALGORITHMS}

Maximising {\RAM} is challenging because its distance-calibrated null term depends on dense representation-distance interactions. We present {\BGD} as a divisive reference optimiser and {\MGA} as a scalable multi-level greedy optimiser with {\RAM}-specific gain updates and centroid summaries.

\subsection{Betweenness-Guided Divisive Search Algorithm (\BGD)}

As a reference optimiser for small and medium-sized graphs, we also consider a betweenness-guided divisive strategy inspired by Girvan--Newman~\cite{newman2004finding}. {\BGD} maintains a pruned working graph while evaluating every candidate partition using {\RAM} on the original graph. Starting from the original graph, it repeatedly removes edges with high weighted betweenness. To incorporate representation-aware affinity into the decomposition process, we define the length of an observed edge $(u,v)$ as $\ell_{uv}=1/(\epsilon+w_{uv})$, where $\epsilon>0$ is a small smoothing constant. Thus, paths are biased towards high-affinity edges, and low-affinity bridge edges are more likely to receive high betweenness and be removed earlier.

Whenever an edge removal splits a connected component, the connected components of the current working graph define a candidate partition. {\BGD} keeps the best partition found over the sequence of splits and returns the one with the largest {\RAM} value. The process continues until no further splits are possible. As a practical search convention, {\BGD} reports the best non-trivial partition encountered after at least one split. The all-in-one partition is therefore not used as the returned solution, although it remains a valid partition under the objective definition.

The dominant cost of {\BGD} is repeated weighted edge-betweenness computation. Exact Brandes computation costs $O(nm+n^2\log n)$ per recomputation on a weighted graph. We approximate it by sampling $x$ source nodes, reducing the cost to $O(x(m+n\log n))$ per recomputation; by default, $x=\sqrt n$. Since many recomputations may be needed before decomposition, {\BGD} remains costly and is used mainly as a reference optimiser. Scalable optimisation on large graphs is handled by {\MGA}.


\subsection{Multi-Level Greedy Aggregation Algorithm (\MGA)}
\label{sec:MGA}

\spara{Framework overview.}
To optimise {\RAM} at scale, we propose {\MGA}, a multi-level hierarchical optimisation framework. At each level, the {\single} procedure visits each node and moves it to a neighbouring community whenever the move gain is positive. The resulting communities are then contracted into super-nodes to form a coarsened graph. This cycle of local refinement and graph contraction then repeats until no merge occurs.

\subsubsection{{\single}}\label{sec:single}\mbox{}\\
The {\single} performs a single greedy pass: each node is moved at most once if a neighbouring community yields a positive gain. The pass terminates when no further positive move exists.

\SetAlgoSkip{}
\begin{algorithm}[t]
\caption{SingleLevelGreedy($G, \mathcal{Z}, \phi, \gamma, \hat{g}$)}
\label{alg:singlelevel}
\small
\SetAlgoLined
\DontPrintSemicolon
\SetKwFunction{CA}{computeAffinities}
\SetKwFunction{EC}{estimateCalibration}
\SetKwFunction{IS}{initSummaries}
\SetKwFunction{MV}{move}
\SetKwFunction{US}{updateSummaries}
$\mathcal{C}^\star \leftarrow \{\{v\} : v \in V\}$\;
$\IS(\mathcal{C}^\star)$\;
$\mathit{visited}[v] \leftarrow \mathrm{false}$ for all $v \in V$\;
$\mathit{improvement} \leftarrow \mathrm{true}$\;
\While{\textit{improvement}}{
    $\mathit{improvement} \leftarrow \mathrm{false}$\;
    \ForEach{\textnormal{node} $u \in V$ \textnormal{with} $\mathit{visited}[u] = \mathrm{false}$}{
        $A \leftarrow C(u)$\;
        $\mathcal{N}(u) \leftarrow \{C(v) \mid v \in N(u)\}$\;
        $B \leftarrow \underset{C \in \mathcal{N}(u)}{\arg\max}\, \Delta Q(u: A \rightarrow C)$\;
        \If{$\Delta Q > 0$}{
            $\MV(u, A, B, \mathcal{C}^\star)$\;
            $\US(A, B)$\;
            $\mathit{visited}[u] \leftarrow \mathrm{true}$\;
            $\mathit{improvement} \leftarrow \mathrm{true}$\;
        }
    }
}
\Return{$\mathcal{C}^\star$}
\end{algorithm}

\spara{Local gain under {\RAM}.}
Consider moving a node $u$ from its current community $A$ to a candidate community $B$. Because $Q_{\mathrm{RA}}$ is defined as a sum over node pairs, this move only affects the observed affinity and distance-calibrated null terms directly involving $u$.

For the observed term, since $w_{uv}$ is zero for non-adjacent nodes, only neighbours of $u$ contribute: $w(u, C) = \sum_{v \in N(u)\cap C} w_{uv}$.
For the null term, all node pairs involving $u$ contribute: $p(u, C) = \sum_{v \in C} p_{uv}$.

Unlike classical modularity, where the gain depends only on the neighbours of $u$ and a community-level degree sum, the {\RAM} null term aggregates $g(d_{uv})$ over all pairs between $u$ and the community. As a result, the local gain combines the graph-supported observed affinity with the community-aggregated null term.
Combining these, the objective gain from moving $u$ is $\Delta Q(u: A \rightarrow B)
=$
\begin{equation*}
\scalebox{0.95}{$\displaystyle
\frac{1}{2m_w}
\Big(
\big[w(u,B) - p(u,B)\big]
-
\big[w(u,A\setminus\{u\}) - p(u,A\setminus\{u\})\big]
\Big)
$}
\end{equation*}

\spara{Local search strategy.}
To efficiently find the best move for a node $u$, we only consider candidate communities that contain at least one of its neighbours. This restricted candidate set is defined as $\mathcal{N}(u) = \{ C(v) \mid v \in N(u) \}$, where $C(v)$ denotes the community that node $v$ belongs to. For each candidate community $B \in \mathcal{N}(u)$, we compute the objective gain $\Delta Q(u: A \rightarrow B)$. We then move $u$ to the community that offers the highest positive gain ($\Delta Q > 0$). Otherwise, $u$ stays in its current community $A$.

\spara{Centroid-based approximation.}
Evaluating $p(u, C) = \sum_{v \in C} p_{uv}$ exactly requires iterating over all nodes in $C$, which makes the per-move cost scale with community size. To reduce this cost, we approximate $p(u,C)$ using a community-level summary. For each community $C$, we maintain the degree sum $K_C = \sum_{v \in C} k_v$, the embedding sum $Z_C = \sum_{v \in C} z_v$, and the cardinality $|C|$. The centroid of $C$ is $\bar z_C = Z_C / |C|$, and the null interaction between $u$ and $C$ is approximated as $\tilde p(u, C) = \gamma \frac{k_u K_C}{2m} \hat g\big(\|z_u - \bar z_C\|\big).$

This centroid-based approximation replaces $|C|$ distance-dependent evaluations with a single evaluation of $\hat g$ at the node--centroid distance, reducing the null-term cost from $O(|C|d)$ to $O(d)$ per candidate community.

Whenever a node $u$ moves from $A$ to $B$, the three summaries are updated incrementally:
\begin{align*}
&Z_A \leftarrow Z_A - z_u, \quad K_A \leftarrow K_A - k_u, \quad |A| \leftarrow |A| - 1,\\
&Z_B \leftarrow Z_B + z_u, \quad K_B \leftarrow K_B + k_u, \quad |B| \leftarrow |B| + 1.
\end{align*}
Each update runs in $O(d)$ time and requires $O(kd)$ total memory, where $k$ is the current number of communities. Combined with the local search over $\mathcal{N}(u)$, the cost of evaluating and executing a move becomes $O(|N(u)| + |\mathcal{N}(u)|d)$.

We bound the error introduced by this approximation. Since the approximation replaces multiple distance evaluations with a single one at the centroid, its accuracy depends on how $\hat g(\cdot)$ varies with distance and how spread out the community members are around their centroid. We make this dependence precise as follows.

First, assume that $\hat g(\cdot)$ is $L_{\hat g}$-Lipschitz, i.e., $|\hat g(x)-\hat g(y)| \le L_{\hat g}|x-y|$ for all $x,y \ge 0$. Second, we measure how spread out community $C$ is around its centroid via $\sigma_C = \frac{1}{K_C} \sum_{v \in C} k_v \|z_v-\bar z_C\|$, which is small when the members of $C$ are tightly clustered around $\bar z_C$.

Combining these, the approximation error satisfies
Combining these, the approximation error satisfies
\begin{align*}
\scalebox{0.8}{$\left|p(u,C)-\tilde p(u,C)\right|$}
&\scalebox{0.8}{$\displaystyle
{}=
\left|
\gamma \frac{k_u}{2m}
\sum_{v \in C} k_v \hat g(\|z_u-z_v\|)
-
\gamma \frac{k_u K_C}{2m} \hat g(\|z_u-\bar z_C\|)
\right|$} \\
&\scalebox{0.9}{$\displaystyle
{}\le
\gamma \frac{k_u}{2m}
\sum_{v \in C} k_v
\left|
\hat g(\|z_u-z_v\|)
-
\hat g(\|z_u-\bar z_C\|)
\right|$} \\
&\scalebox{0.9}{$\displaystyle
{}\le
\gamma \frac{k_u L_{\hat g}}{2m}
\sum_{v \in C} k_v
\left|
\|z_u-z_v\|-\|z_u-\bar z_C\|
\right|$} \\
&\scalebox{0.9}{$\displaystyle
{}\le
\gamma \frac{k_u L_{\hat g}}{2m}
\sum_{v \in C} k_v\|z_v-\bar z_C\|
=
\gamma \frac{k_u K_C}{2m} L_{\hat g}\sigma_C$}
\end{align*}
The final bound has a clear interpretation: the approximation error is small when either $\hat g(\cdot)$ varies slowly with distance (small $L_{\hat g}$) or the community members are tightly clustered in the representation space (small $\sigma_C$). The latter condition is naturally aligned with {\RAM}, which favours communities whose members are coherent in the representation space. This makes the centroid approximation a reasonable choice within the {\RAM} framework.

\spara{Termination.}
Each accepted move has a positive estimated gain $\Delta Q>0$. To avoid cyclic reassignments under the centroid-based approximation, we use a locked greedy pass in which each node is moved at most once. This caps the number of accepted moves by $n$, and the procedure terminates when no unlocked node has a positive-gain move to a neighbouring community. The {\single} procedure is summarised in Algorithm~\ref{alg:singlelevel}.

\subsubsection{Aggregation under {\RAM}}\label{sec:aggregation}\mbox{}\\
A key requirement in the multi-level design is that {\RAM} remains consistent under graph contraction. Let $X$ and $Y$ denote two super-nodes, each representing a set of original nodes. We define the aggregated observed affinity and distance-calibrated null terms as
\begin{align*}
W_{XY} = \sum_{u \in X} \sum_{v \in Y} w_{uv}, \quad \text{and} \quad
P_{XY} = \sum_{u \in X} \sum_{v \in Y} p_{uv}
\end{align*}
Intuitively, $W_{XY}$ sums the observed affinity of the original edges between $X$ and $Y$, while $P_{XY}$ sums the distance-calibrated null contributions over all node pairs with one endpoint in $X$ and one in $Y$. If both quantities are computed exactly, evaluating $Q_{\mathrm{RA}}$ on the coarsened graph is algebraically equivalent to evaluating it on the original graph for the corresponding partition. This exact aggregation principle clarifies what must be preserved under contraction. In our scalable implementation, however, storing and updating all exact pairwise null interactions $P_{XY}$ can be computationally expensive. Therefore, {\MGA} uses the centroid-based summaries introduced in Section~\ref{sec:single} to approximate the null interactions at coarser levels. 

\subsubsection{Multi-Level Iteration}\label{sec:multi-level-iteration}\mbox{}\\
\textbf{Local gain at the super-node level.}
At each level, {\MGA} applies the same gain formulation as the {\single}, now at the super-node level. For a super-node $X$ and a community $C$, we define the aggregated observed and distance-calibrated null terms as
\begin{align*}
W(X, C) = \sum_{Y \in C} W_{XY}, \quad \text{and} \quad
P(X, C) = \sum_{Y \in C} P_{XY}
\end{align*}
Using these, the gain from moving $X$ from its current community $A$ to a candidate community $B$ is given by $\Delta Q(X: A \rightarrow B)
=$
\begin{equation*}
\scalebox{0.87}{$\displaystyle
\frac{1}{2m_w}
\Big(
[W(X,B) - P(X,B)]
-
[W(X,A\setminus\{X\}) - P(X,A\setminus\{X\})]
\Big)
$}
\end{equation*}

\begin{algorithm}[t]
\caption{{\MGA}($G, \mathcal{Z}, \phi, \gamma$)}
\label{alg:mga}
\small
\SetAlgoLined
\DontPrintSemicolon
\SetKwFunction{CA}{computeAffinities}
\SetKwFunction{EC}{estimateCalibration}
\SetKwFunction{SLM}{SingleLevelGreedy}
\SetKwFunction{CT}{graphContract}
\SetKwFunction{PJ}{projectToOriginal}

$\CA(G, \mathcal{Z}, \phi)$\;
$\hat{g} \leftarrow \EC(G, \mathcal{Z}, \phi)$\;
$G' \leftarrow G$\;
\While{\textnormal{true}}{
    $\mathcal{C} \leftarrow \SLM(G',\mathcal{Z},\phi,\gamma,\hat{g})$\;
    \If{$|\mathcal{C}| = |V'|$}{
        \textbf{break}\;
    }
    $G' \leftarrow \CT(G', \mathcal{C})$\;
}
$\mathcal{C}^\star \leftarrow \PJ(\mathcal{C})$\;
\Return{$\mathcal{C}^\star$}
\end{algorithm}

\spara{Algorithm description.}
At each level $t$, we apply {\single} to the current graph $G^{(t)}$ to obtain a partition $\mathcal{C}^{(t)}$. If no merges occur at this level, the algorithm terminates. Otherwise, each community in $\mathcal{C}^{(t)}$ is contracted into a single super-node, yielding the coarsened graph $G^{(t+1)}$ with aggregated observed affinities $W_{XY}$ and per-super-node summaries $(Z_X, K_X, |X|)$. These summaries are used to approximate the distance-calibrated null interactions at the coarser level, rather than explicitly materialising all exact pairwise terms $P_{XY}$. The procedure then continues on $G^{(t+1)}$ until termination, at which point the final super-node partition is mapped back to the original nodes. Note that, although we describe the procedure using indexed graphs $G^{(0)}, G^{(1)}, \ldots$ for clarity, our implementation keeps only the current-level graph $G'=(V',E')$ in memory. The procedure is given in Algorithm~\ref{alg:mga}.

\spara{Design rationale.}
{\MGA} combines local refinement with multi-level aggregation. The {\single} improves the partition through node-level reassignments, but local moves alone can trap the search in poor local optima and provide only a limited view of global structure. Graph contraction mitigates this limitation by enabling larger structural changes across coarsened levels. Unlike Louvain~\cite{blondel2008fast}, however, aggregation in {\MGA} must account for both observed affinity and distance-calibrated null interactions: while Louvain tracks only aggregated edge weights because its null model is degree-based, {\MGA} additionally maintains per-super-node representation summaries to approximate the aggregated null interactions at each level.

\spara{Time complexity.}
Let $L$ be the number of coarsening levels, and let $(n_t,m_t)$ denote the number of nodes and edges at level $t$, and let $\alpha_t$ denote the number of {\single} passes at level $t$. With the centroid-based approximation, evaluating and executing a move for node $u$ costs $O(|N(u)|+|\mathcal{N}(u)|d)$ by scanning its neighbours once, where $\mathcal{N}(u)$ is the set of neighbouring communities. Since $|\mathcal{N}(u)|\le |N(u)|$, one pass of {\single} at level $t$ costs $O(m_t d)$, and $\alpha_t$ passes cost $O(\alpha_t m_t d)$. Graph contraction, including the update of centroid summaries, costs $O(m_t+n_t d)$. Including the preprocessing cost $O(md+|\mathcal{S}|d)$ for edge affinities and the sampled-pair calibration function, the total running time of {\MGA} is $O\!\left( md+|\mathcal{S}|d+ \sum_{t=0}^{L-1}\bigl(\alpha_t m_t d+m_t+n_t d\bigr) \right)$.  In practice, $\alpha_t$ is small and $(n_t,m_t)$ decreases rapidly after coarsening, so the running time is dominated by the first few coarsening levels, which is consistent with the near-linear trend observed in EQ8.

\section{EXPERIMENTS}

\subsection{Experimental Questions}
To evaluate the proposed algorithm, we conduct extensive experiments to address the following eight experimental questions (\textbf{EQs}):

\begin{itemize}[leftmargin=*]
    \item \textbf{EQ1. Topology-based baselines:}
    How does the proposed model compare with topology-based community detection methods in effectiveness and efficiency?
    
    \item \textbf{EQ2. Embedding-based baselines:}
    How does the proposed model compare with clustering methods that use the same node representations?

    \item \textbf{EQ3. Deep learning-based baselines:}
    How does the proposed model compare with deep community detection methods, and can their learned embeddings further improve its overall effectiveness?
    
    \item \textbf{EQ4. Effect of $\gamma$:}
    How does the resolution parameter $\gamma$ affect community detection quality?
    
    \item \textbf{EQ5. Distance calibration:}
    Does removing $\hat{g}(\cdot)$, the degree factor, or representation--graph alignment degrade overall clustering performance?
    
    \item \textbf{EQ6. Centroid approximation:}
    How does centroid-based null-term approximation affect quality and running time?
    
    \item \textbf{EQ7. Robustness:}
    How robust is the model to different similarity functions and embedding methods?
    
    \item \textbf{EQ8. Scalability:}
    How does the algorithm scale in running time as the graph size increases?
\end{itemize}

\subsection{Experimental Setting}
\spara{Datasets.}
We use seven real-world datasets: Karate~\cite{zachary1977information}, Cora, Citeseer, and Amazon PC~\cite{fey2019fastgraphrepresentationlearning} have disjoint ground-truth communities, while DBLP, Amazon, and YouTube from SNAP~\cite{snapnets} contain overlapping ground-truth communities. The datasets cover both small-scale attributed networks and large-scale real-world networks. Table~\ref{tab:datasets} summarises their basic statistics, including the number of nodes ($\boldsymbol{|V|}$), the number of edges ($\boldsymbol{|E|}$), the number of clusters ($\boldsymbol{|C|}$), and the average degree ($\boldsymbol{d_{avg}}$).

\begin{table}[t]
\caption{Real-world datasets}
\label{tab:datasets}
\centering
\small
\setlength{\tabcolsep}{4pt}
\renewcommand{\arraystretch}{1.1}

\begin{tabular}{c|c|c|c|c|c}
\toprule
\textbf{Type} & \textbf{Dataset}
& $\boldsymbol{|V|}$
& $\boldsymbol{|E|}$
& $\boldsymbol{|C|}$
& $\boldsymbol{d_{avg}}$ \\
\midrule
\multirow{4}{*}{Small-scale}
& Karate    & 34        & 78        & 2      & 4.6 \\
& Cora      & 2,708     & 5,278    & 7      & 3.9 \\
& Citeseer  & 3,327     & 4,552     & 6      & 2.7 \\
& Amazon PC & 13,752    & 245,861   & 10     & 35.8 \\
\midrule
\multirow{3}{*}{Large-scale}
& DBLP$^{*}$    & 317,080   & 1,049,866 & 13,477 & 6.6 \\
& Amazon$^{*}$  & 334,863   & 925,872   & 75,149 & 5.5 \\
& YouTube$^{*}$ & 1,134,890 & 2,987,624 & 8,385  & 5.3 \\
\bottomrule
\end{tabular}

\vspace{2pt}
\parbox{\linewidth}{\footnotesize\raggedright
$^{*}$ Dataset with overlapping ground-truth communities.}
\vspace{-0.6cm}

\end{table}

\begin{table*}[t]
\centering
\caption{EQ2. Performance comparison on small-scale datasets (AG+: baseline with AutoGL embeddings (same as {\MGA}))}
\label{tab:small_comparison}
\small
\resizebox{\textwidth}{!}{
\begin{tabular}{l | ccccc | ccccc | ccccc}
\toprule
 & \multicolumn{5}{c|}{\textbf{NMI} ($\uparrow$)} & \multicolumn{5}{c|}{\textbf{ARI} ($\uparrow$)} & \multicolumn{5}{c}{\textbf{Time (s)} ($\downarrow$)} \\
\cmidrule(lr){2-6} \cmidrule(lr){7-11} \cmidrule(lr){12-16}
\textbf{Dataset} & \textbf{{\MGA}} & \textbf{AG+Kms} & \textbf{AG+Spec} & \textbf{AG+OPTICS} & \textbf{AG+HDB} & \textbf{{\MGA}} & \textbf{AG+Kms} & \textbf{AG+Spec} & \textbf{AG+OPTICS} & \textbf{AG+HDB} & \textbf{{\MGA}} & \textbf{AG+Kms} & \textbf{AG+Spec} & \textbf{AG+OPTICS} & \textbf{AG+HDB} \\
\midrule
Karate & \textbf{0.8519} & 0.2260 & \underline{0.8255} & 0.5434 & 0.2260 & \textbf{0.8289} & 0.1095 & \underline{0.8027} & 0.5019 & 0.1095 & \textbf{0.0006} & 0.0920 & 0.0309 & 0.0323 & \underline{0.0010} \\
Cora & \underline{0.4673} & \textbf{0.5131} & 0.0249 & 0.3799 & 0.2127 & \underline{0.2808} & \textbf{0.3730} & 0.0020 & 0.0025 & -0.0106 & \underline{0.7452} & 2.3741 & 3.4009 & 1.0971 & \textbf{0.5323} \\
Citeseer & 0.3351 & 0.1425 & 0.0758 & \underline{0.3563} & \textbf{0.3593} & \textbf{0.1232} & \underline{0.0410} & 0.0008 & 0.0032 & 0.0131 & \textbf{1.2368} & 5.9057 & 6.8476 & 2.1494 & \underline{1.8969} \\
Amazon PC & \textbf{0.5523} & 0.2021 & 0.0033 & \underline{0.3298} & 0.0495 & \textbf{0.4049} & \underline{0.1554} & -0.0005 & 0.0000 & 0.0000 & \textbf{2.4663} & 9.3387 & 318.0584 & 15.4501 & \underline{8.9653} \\
\bottomrule
\end{tabular}
}
\end{table*}

\begin{table*}[t]
\centering
\caption{EQ2. Performance comparison on large-scale datasets (AG+: baseline with AutoGL embeddings (same as {\MGA}))}
\label{tab:large_comparison}
\small
\resizebox{\textwidth}{!}{
\begin{tabular}{l | ccccc | ccccc | ccccc}
\toprule
 & \multicolumn{5}{c|}{\textbf{1 - Conductance} ($\uparrow$)} & \multicolumn{5}{c|}{\textbf{Modularity} ($\uparrow$)} & \multicolumn{5}{c}{\textbf{Time (s)} ($\downarrow$)} \\
\cmidrule(lr){2-6} \cmidrule(lr){7-11} \cmidrule(lr){12-16}
\textbf{Dataset} & \textbf{{\MGA}} & \textbf{AG+Kms} & \textbf{AG+Spec} & \textbf{AG+OPTICS} & \textbf{AG+HDB} & \textbf{{\MGA}} & \textbf{AG+Kms} & \textbf{AG+Spec} & \textbf{AG+OPTICS} & \textbf{AG+HDB} & \textbf{{\MGA}} & \textbf{AG+Kms} & \textbf{AG+Spec} & \textbf{AG+OPTICS} & \textbf{AG+HDB} \\
\midrule
Amazon & \textbf{0.9669} & \underline{0.8925} & OOT & 0.1882 & 0.3466 & \textbf{0.9214} & 0.3725 & OOT & 0.2269 & \underline{0.4325} & \textbf{6.7645} & \underline{284.2898} & OOT & 12408.9564 & 19545.6532 \\
DBLP & \textbf{0.8915} & \underline{0.7558} & OOT & 0.1320 & 0.2728 & \textbf{0.8091} & \underline{0.6808} & OOT & 0.1540 & 0.2862 & \textbf{4.6996} & \underline{129.0685} & OOT & 5664.4762 & 6210.7943 \\
YouTube & \textbf{0.9060} & \underline{0.6857} & OOT & OOT & OOT & \textbf{0.6969} & \underline{0.0823} & OOT & OOT & OOT & \textbf{20.0166} & \underline{685.6667} & OOT & OOT & OOT \\
\bottomrule
\end{tabular}
}
\end{table*}

\spara{Baselines.}
We compare {\MGA} with three groups of baselines. The topology-based baselines mainly consist of Girvan--Newman~\cite{newman2004finding}, Greedy Modularity~\cite{clauset2004finding}, Louvain~\cite{blondel2008fast}, Leiden~\cite{traag2019louvain}, and Infomap~\cite{rosvall2008maps}, covering classical divisive, agglomerative, modularity-based, and information-theoretic community detection methods. This selection follows a comparative analysis~\cite{lancichinetti2009community} reporting that modularity-based methods are among the top performers. The embedding-based baselines include $k$-means~\cite{mcqueen1967some}, spectral clustering~\cite{ng2001spectral}, OPTICS~\cite{ankerst1999optics}, and HDBSCAN~\cite{campello2013density}, all applied to the same node representations used by {\MGA}. We select the number of clusters for $k$-means via silhouette score~\cite{rousseeuw1987silhouettes} and for spectral clustering via the eigengap heuristic~\cite{von2007tutorial}, and set the minimum samples for OPTICS to $\ln n$~\cite{birant2007st}. The deep learning-based baselines include DMoN~\cite{tsitsulin2023graph} and DGCluster~\cite{bhowmick2024dgcluster}; we also evaluate DMoN+{\MGA} and DGC+{\MGA} by feeding their learned embeddings into {\MGA}. For deep baselines, we use the authors' official implementations with default hyperparameters. Methods that do not finish within 24 hours are marked as OOT.

\spara{Experimental setup.}
For the similarity function $\phi$, we use a rescaled cosine similarity $\phi(z_u, z_v) = (1 + \cos(z_u, z_v))/2$, mapping the cosine similarity to $[0, 1]$. The resolution parameter is set to $\gamma = 1$.
For the calibration function $g(d)$, we estimate it from sampled node pairs. Specifically, we sample 30\% of all possible node pairs uniformly at random, capped at a maximum of 5M pairs due to memory constraints. We then partition the sampled pairs by clustering their pairwise distances using $k$-means~\cite{mcqueen1967some}, where the number of clusters is automatically selected by maximising the silhouette score~\cite{rousseeuw1987silhouettes}. For datasets without node features, we initialise node representations using DeepWalk~\cite{perozzi2014deepwalk}. To ensure a fair comparison across datasets, we use AutoGL~\cite{zhang2021autogl} with its default settings to automatically select unsupervised node embedding methods, and use the recommended embeddings as input to our model. {\BGD} samples $\min(\sqrt{n}, 100)$ source nodes when estimating edge betweenness.

\spara{Evaluation metrics.}
For datasets with disjoint ground-truth communities, we report NMI~\cite{danon2005comparing} and ARI~\cite{hubert1985comparing}. For Amazon, DBLP, and YouTube, whose ground-truth communities overlap, we report conductance~\cite{kannan2004clusterings} and modularity~\cite{newman2004finding}. All scores are averaged over 10 independent runs.

\subsection{Experimental Result}

\begin{figure}[h]
    \centering
    \includegraphics[width=0.99\linewidth]{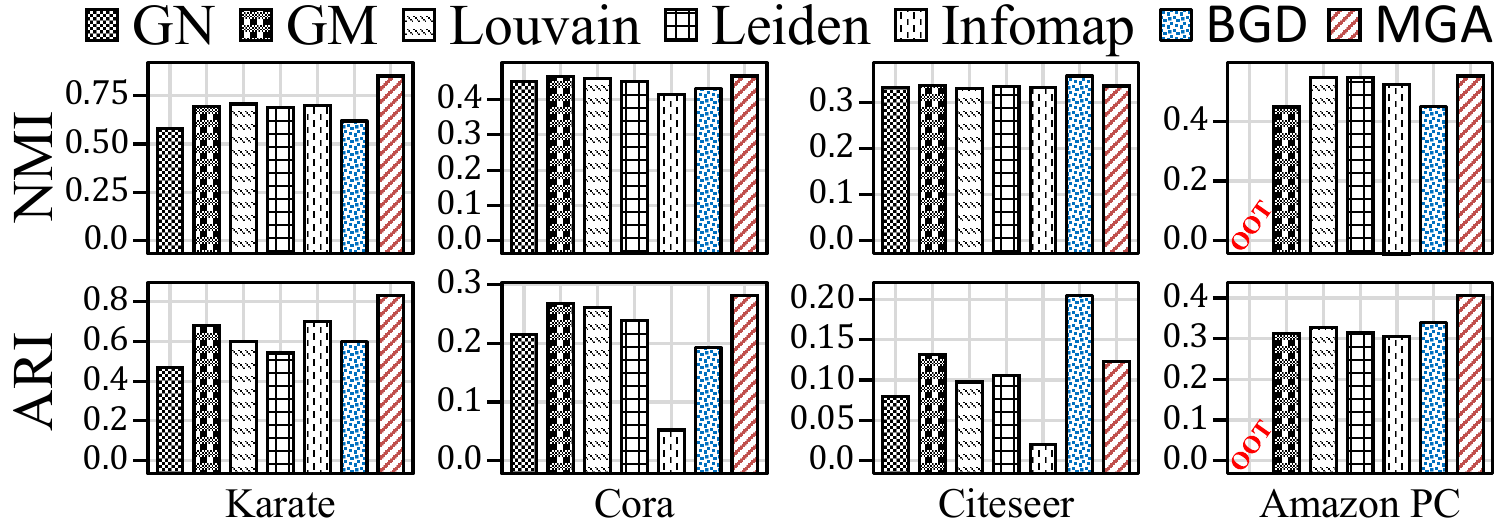}
    \vspace{-0.3cm}
    \caption{EQ1.1. Effectiveness on small-scale datasets}
    \label{fig:effect_small}
    \vspace{-0.4cm}
\end{figure}

\spara{EQ1.1. Effectiveness on small-scale datasets.}
Figure~\ref{fig:effect_small} reports the NMI and ARI results on the four real-world datasets with disjoint ground-truth communities. GN did not finish on Amazon~PC within the 24-hour time limit and is therefore marked as OOT. Overall, {\MGA} achieves higher clustering quality than the topology-only baselines across the evaluated datasets, improving NMI and ARI by $1.05\times$ and $1.10\times$ on average, respectively. The improvement is most pronounced on Karate, where {\MGA} obtains an NMI of $0.852$ and an ARI of $0.829$, exceeding the best baseline by $\boldsymbol{20.5}\%$ and $\boldsymbol{18.0}\%$, respectively. On Citeseer, all methods exhibit relatively low performance, which is attributable to its fragmented graph structure: the graph contains 390 connected components, and only 64.7\% of nodes belong to the largest component. Nevertheless, {\BGD} and {\MGA} remain competitive with, or outperform, the strongest baselines. These results indicate that representation-aware modularity can improve community alignment over topology-only objectives when node representations provide informative proximity signals.

\vspace{-0.2cm}
\begin{figure}[h]
    \centering
    \includegraphics[width=0.99\linewidth]{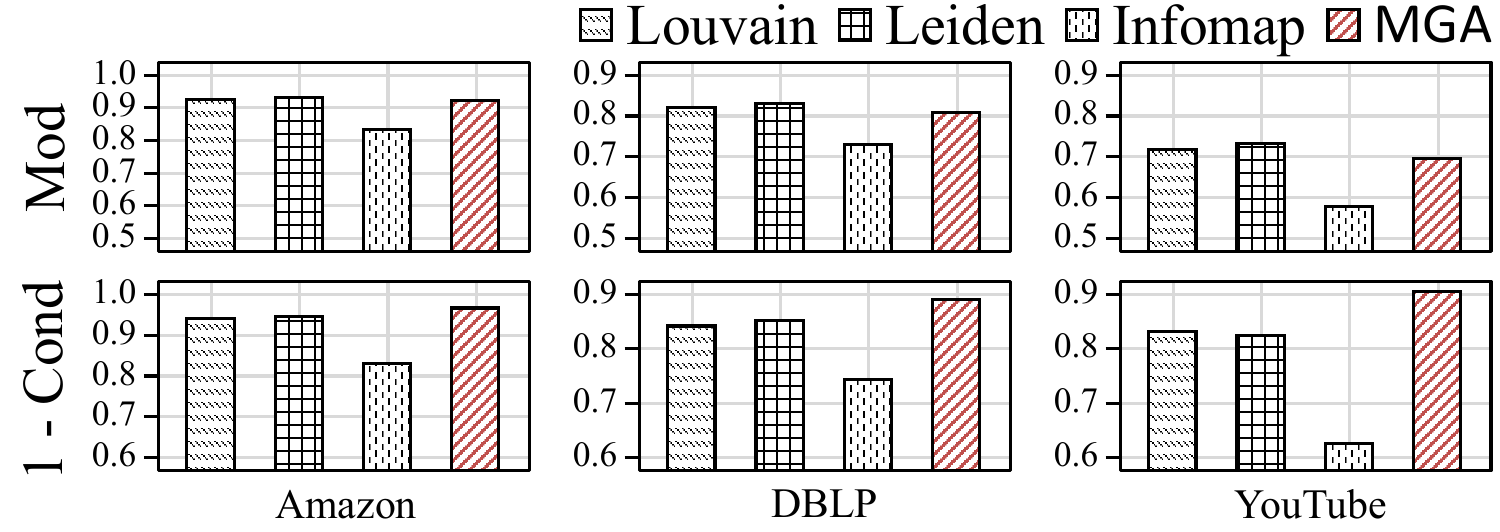}
    \vspace{-0.3cm}
    \caption{EQ1.2. Effectiveness on large-scale datasets (Mod: Modularity, 1 - Cond: $1-\text{conductance}$).}
    \label{fig:effect_large}
    \vspace{-0.4cm}
\end{figure}

\spara{EQ1.2. Effectiveness on large-scale datasets.}
Figure~\ref{fig:effect_large} reports modularity and $1-\text{conductance}$ on the three large-scale datasets with overlapping ground-truth communities. Since the ground-truth overlaps whereas the evaluated methods return non-overlapping partitions, NMI and ARI are not directly applicable. We report conductance~\cite{kannan2004clusterings}, which measures the normalised fraction of edges crossing community boundaries, and modularity, the canonical objective optimised by Louvain and Leiden. For consistency, we use $1-\text{conductance}$ so that higher values indicate better separation. GN, GM, and {\BGD} are excluded because they did not finish within the 24-hour time limit. On modularity, although {\MGA} does not directly optimise classical modularity, it remains close to Leiden, the strongest modularity-optimising baseline, with an average gap of $0.022$ across the three datasets, while achieving higher modularity than Infomap. On $1-\text{conductance}$, {\MGA} consistently outperforms all baselines across the three datasets. The improvement is most pronounced on YouTube, where {\MGA} obtains $0.906$, exceeding the best baseline by $9.0\%$. These results indicate that {\MGA} remains competitive under classical modularity while producing partitions with favourable boundary separation on large-scale networks.


\begin{figure}[h]
    \centering
    \includegraphics[width=0.99\linewidth]{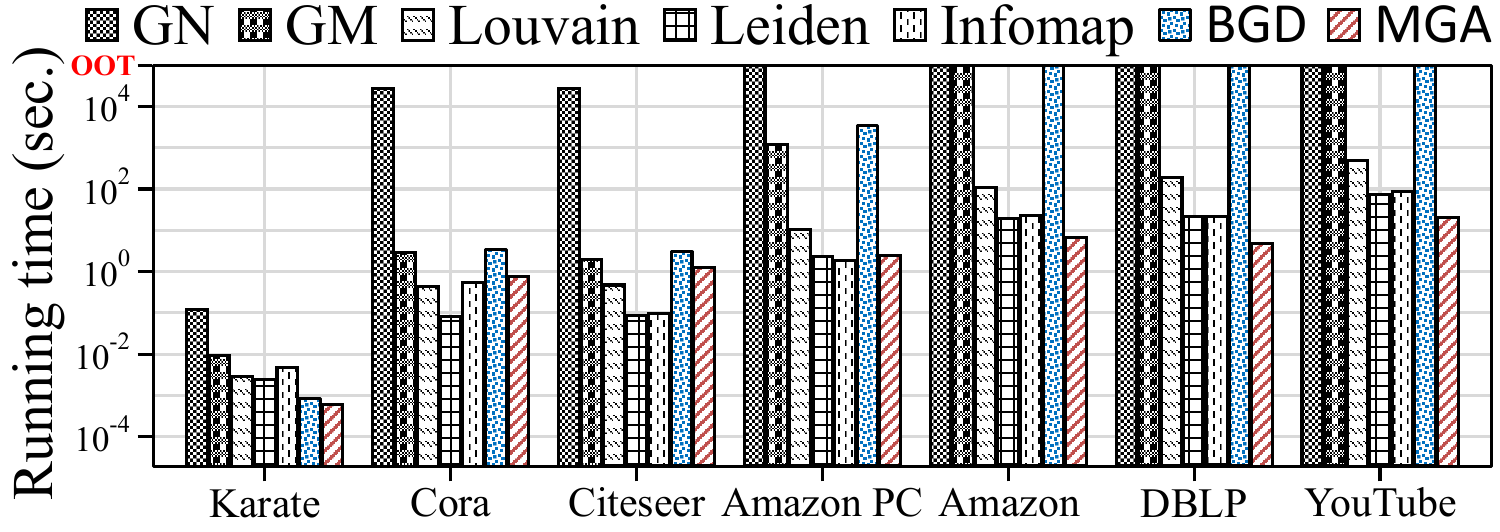}
    \vspace{-0.3cm}
    \caption{EQ1.3. Efficiency}
    \label{fig:efficiency}
    \vspace{-0.4cm}
\end{figure}

\spara{EQ1.3. Efficiency.} 
Figure~\ref{fig:efficiency} reports the total running time of all algorithms across the seven datasets. For {\MGA}, the reported time includes computing edge affinities $w_{uv}$, estimating $\hat{g}(d_{uv})$, and running the optimisation procedure; embedding learning time is excluded because node representations are treated as input to the proposed objective. On Cora and Citeseer, {\MGA} is slightly slower than Louvain, Leiden, and Infomap, mainly because the precomputation of $\hat{g}(d_{uv})$ and $w_{uv}$ accounts for over $98\%$ of its total running time. The optimisation step itself takes less than $0.02\,\mathrm{s}$, so the absolute overhead remains small. On the large-scale datasets, {\MGA} shows favourable running time under the same reporting protocol, running $3.69\times$ faster than Leiden on average. This behaviour is consistent with its multi-level design, where graph contraction reduces the size of subsequent optimisation instances. Overall, the results suggest that distance calibration incurs manageable additional preprocessing overhead and that {\MGA} remains practically scalable, although the reported times exclude embedding learning, which the topology-only baselines do not require.


\spara{EQ2. Embedding-based baselines.}
We compare {\MGA} with four embedding-based clustering baselines applied to the same node representations used as input to {\MGA}. Tables~\ref{tab:small_comparison} and~\ref{tab:large_comparison} report effectiveness and efficiency on the small-scale and large-scale datasets, respectively. On the small-scale datasets, {\MGA} achieves the best or second-best score in seven out of eight NMI/ARI cases. In contrast, every embedding-based baseline produces near-zero or negative ARI on at least one dataset, whereas {\MGA} maintains non-trivial ARI across all four datasets. The advantage of {\MGA} becomes more pronounced on the large-scale datasets. In terms of effectiveness, {\MGA} achieves the best $1-$conductance and modularity among the embedding-based baselines on all three large-scale datasets. In terms of efficiency, {\MGA} is also consistently the fastest on large graphs, achieving an average speedup of over $34\times$ over $k$-means and orders-of-magnitude improvements over OPTICS and HDBSCAN on the datasets where they finish. These results indicate that directly applying conventional clustering to node representations is not sufficient to obtain reliable community partitions. The improvement of {\MGA} is attributable not only to the use of embeddings, but also to {\RAM}, which couples representation geometry with graph topology through the distance-calibrated null model.

\begin{figure}[h]
    \centering
    \includegraphics[width=0.99\linewidth]{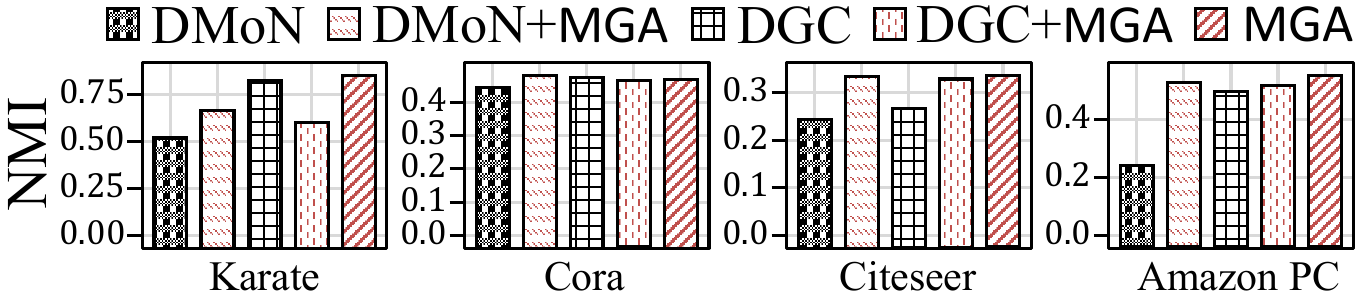}
    \vspace{-0.3cm}
    \caption{EQ3. Deep learning-based baselines}
    \label{fig:deep}
    \vspace{-0.4cm}
\end{figure}

\spara{EQ3. Deep learning-based baselines.}
Figure~\ref{fig:deep} reports the NMI of two deep learning-based community detection baselines and their {\MGA}-augmented variants on the four small-scale datasets. For each deep baseline, we conduct $10$ independent runs and feed the learned embeddings into {\MGA} to obtain the corresponding augmented variant. Across the four datasets, {\MGA} achieves the best or near-best NMI, showing that the proposed objective remains competitive with deep learning-based community detection methods. Moreover, applying {\MGA} on top of learned embeddings generally improves the corresponding baseline. For example, DMoN+{\MGA} outperforms DMoN on all four datasets, improving NMI by $1.48\times$ on average, with the largest gain on Amazon~PC, where NMI more than doubles from $0.243$ to $0.530$. These results suggest that {\MGA} can act as an effective representation-aware optimisation layer, rather than merely relying on the embedding quality itself.


\begin{figure}[h]
    \centering
    \includegraphics[width=0.99\linewidth]{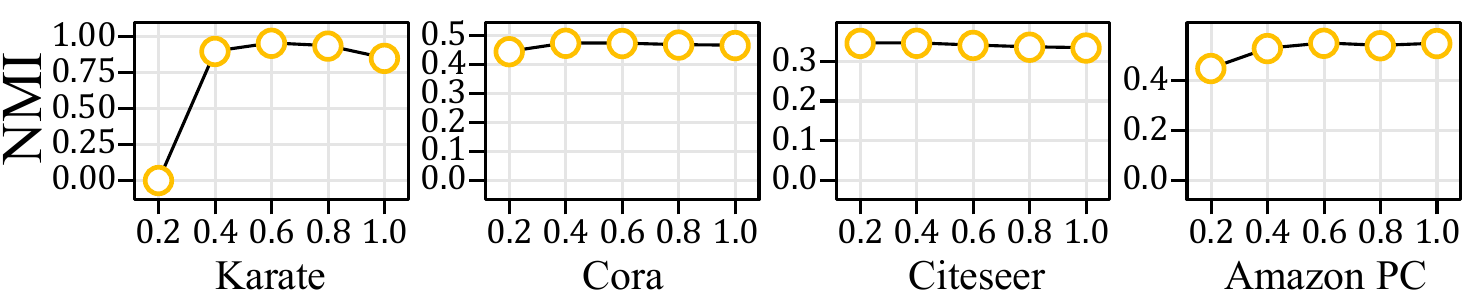}
    \vspace{-0.3cm}
    \caption{EQ4. Effect of $\gamma$}
    \label{fig:gamma}
    \vspace{-0.4cm}
\end{figure}

\spara{EQ4. Effect of $\gamma$.}
Figure~\ref{fig:gamma} reports the NMI of {\MGA} on the four small-scale datasets as the resolution parameter $\gamma$ varies from $0.2$ to $1.0$. On Cora, Citeseer, and Amazon~PC, the NMI increases in the small-$\gamma$ regime and then stabilises around $\gamma=0.4$, with $\gamma=1$ remaining within $1$--$3.5\%$ of the best observed NMI. Citeseer is particularly stable, with an NMI variation of less than $0.012$ over the entire range of $\gamma$. Karate shows a sharper response: the NMI peaks at $\gamma=0.6$ and decreases mildly toward $\gamma=1$. This is expected on such a small graph, where changes in the null term can more directly affect the resulting partition. Even at $\gamma=1$, however, {\MGA} obtains an NMI of $0.852$ on Karate, still outperforming all topology-only baselines in EQ1.1. Since classical modularity commonly adopts $\gamma=1$ as the default resolution setting, we use the same convention instead of tuning $\gamma$ separately for each dataset. These results show that {\MGA} is not highly sensitive to $\gamma$ around the default setting, supporting the use of $\gamma=1$ in the main experiments.


\begin{table}[h]
\centering
\caption{EQ5. Ablation variants of {\RAM}}
\label{tab:ablation_variants}
\small
\setlength{\tabcolsep}{6pt}
\renewcommand{\arraystretch}{1.3}
\begin{tabular}{l|c|c}
\toprule
\textbf{Variant} & \textbf{Observed term} & \textbf{Null term} \\
\midrule
{\RAM} w/o deg          & $A_{uv} \cdot \phi(z_u, z_v)$ & $\gamma\, \hat{g}(d_{uv})$ \\
{\RAM} w/o $\hat{g}$    & $A_{uv} \cdot \phi(z_u, z_v)$ & $\gamma\, \frac{k_u k_v}{2m}$ \\
{\RAM} w $\bar{\phi}$   & $A_{uv} \cdot \phi(z_u, z_v)$ & $\gamma\, \frac{k_u k_v}{2m}\, \bar{\phi}$ \\
{\RAM} w/o $\phi$       & $A_{uv}$                      & $\gamma\, \frac{k_u k_v}{2m}\, \hat{g}(d_{uv})$ \\
\midrule
\textbf{{\RAM} (full)}  & $A_{uv} \cdot \phi(z_u, z_v)$ & $\gamma\, \frac{k_u k_v}{2m}\, \hat{g}(d_{uv})$ \\
\bottomrule
\end{tabular}
\vspace{-0.2cm}
\end{table}

\begin{figure}[h]
    \centering
    \includegraphics[width=0.99\linewidth]{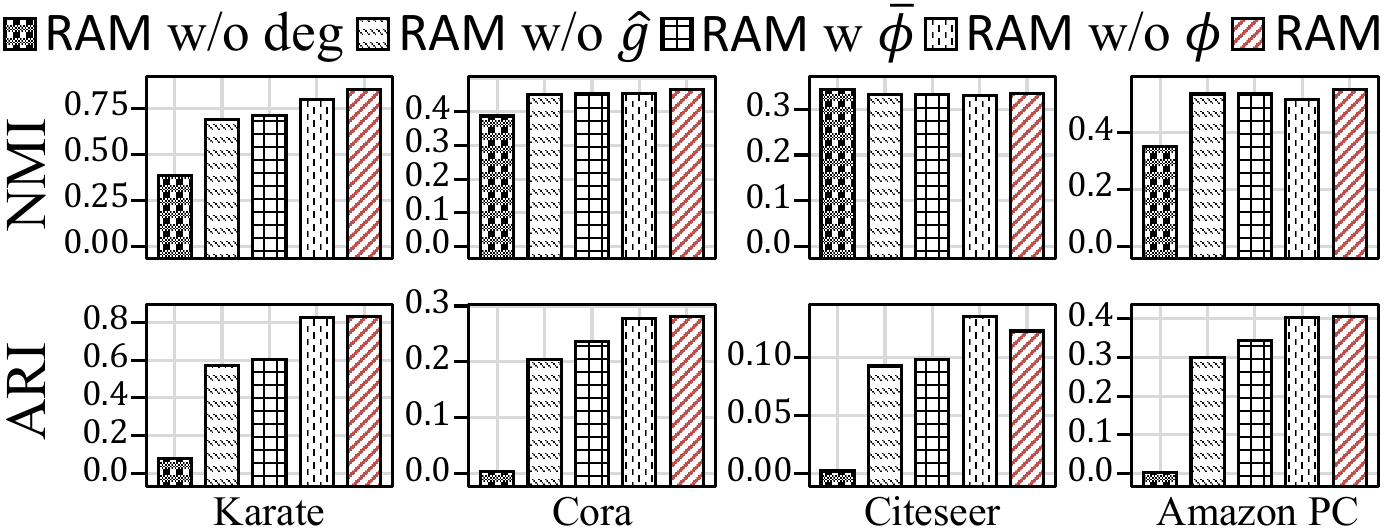}
    \vspace{-0.3cm}
    \caption{EQ5. Distance calibration}
    \label{fig:ablation}
    \vspace{-0.4cm}
\end{figure}

\spara{EQ5. Distance calibration.}
This experiment isolates the contribution of the calibration function $\hat{g}(d_{uv})$ in the {\RAM} null term. We compare full {\RAM} with four ablation variants in Table~\ref{tab:ablation_variants}, all optimised using {\MGA}, and report NMI and ARI in Figure~\ref{fig:ablation}. Across the four datasets, full {\RAM} achieves the highest NMI and ARI in almost all cases. Replacing $\hat{g}(d_{uv})$ with a constant of $1$ ({\RAM} w/o $\hat{g}$) consistently underperforms using the global average affinity ({\RAM} w $\bar{\phi}$), suggesting that both the presence and the quality of the calibration function are important. Removing the degree term ({\RAM} w/o deg) causes the sharpest degradation, confirming the importance of degree heterogeneity; its slightly higher NMI on Citeseer stems from over-fragmentation (1,818 communities, 90\% of nodes in communities of size $\le 3$; ARI~$=0.002$). In contrast, removing $\phi$ from the observed term ({\RAM} w/o $\phi$) yields scores closer to full {\RAM} than the null-term ablations, indicating that geometry-aware calibration mainly contributes through the expected term. Overall, the results support the need to calibrate the null model using both degree heterogeneity and distance-conditioned similarity.


\begin{table}[h]
\centering
\caption{EQ6. Centroid approximation}
\label{tab:centroid_ablation}
\resizebox{\columnwidth}{!}{
\begin{tabular}{l | cc | cc | cc}
\toprule
 & \multicolumn{2}{c|}{\textbf{NMI} ($\uparrow$)} & \multicolumn{2}{c|}{\textbf{ARI} ($\uparrow$)} & \multicolumn{2}{c}{\textbf{Time (s)} ($\downarrow$)} \\
\cmidrule(lr){2-3} \cmidrule(lr){4-5} \cmidrule(lr){6-7}
\textbf{Dataset} & \textbf{Exact} & \textbf{Centroid} & \textbf{Exact} & \textbf{Centroid} & \textbf{Exact} & \textbf{Centroid} \\
\midrule
Karate & 0.8509 & \textbf{0.8519} & \textbf{0.8340} & 0.8289 & 0.0006 & 0.0006 \\
Cora & \textbf{0.4686} & 0.4673 & 0.2777 & \textbf{0.2808} & 1.9659 & \textbf{0.7452} \\
Citeseer & \textbf{0.3380} & 0.3351 & \textbf{0.1305} & 0.1232 & 2.3083 & \textbf{1.2368} \\
Amazon PC & \textbf{0.5562} & 0.5523 & \textbf{0.4252} & 0.4049 & 48.4037 & \textbf{2.4663} \\
\bottomrule
\end{tabular}
}
\end{table}

\spara{EQ6. Centroid approximation.} 
We compare two variants of {\MGA}: \textit{Exact}, which computes $p(u,C)=\sum_{v\in C}p_{uv}$ over all community members, and \textit{Centroid}, which uses the centroid-based approximation $\tilde{p}(u,C)$. We restrict this comparison to small-scale datasets because exact evaluation is computationally expensive on large graphs. Table~\ref{tab:centroid_ablation} reports NMI, ARI, and running time on the four small-scale datasets. Across all datasets, \textit{Centroid} achieves quality comparable to \textit{Exact}, with average gaps of only $0.002$ in NMI and $0.009$ in ARI. At the same time, it substantially reduces overall running time, achieving an average speedup of $6.28\times$. This behaviour is consistent with the error bound in Section~\ref{sec:single}: when community members are coherent in the representation space, the spread $\sigma_C$ remains small, thereby limiting the approximation error. Overall, the centroid approximation preserves clustering quality while improving efficiency, making it a practical choice for scalable {\MGA} optimisation.

\begin{figure}[h]
    \centering
    \begin{subfigure}[h]{\linewidth}
        \centering
        \includegraphics[width=0.99\linewidth]{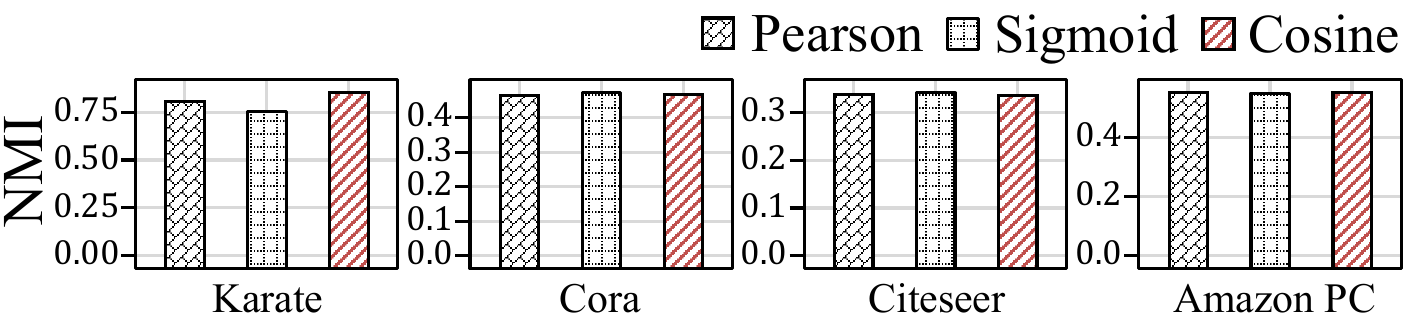}
        \vspace{-0.4cm}
        \subcaption{Variants of similarity}
    \end{subfigure}
    \begin{subfigure}[h]{\linewidth}
        \centering
        \includegraphics[width=0.99\linewidth]{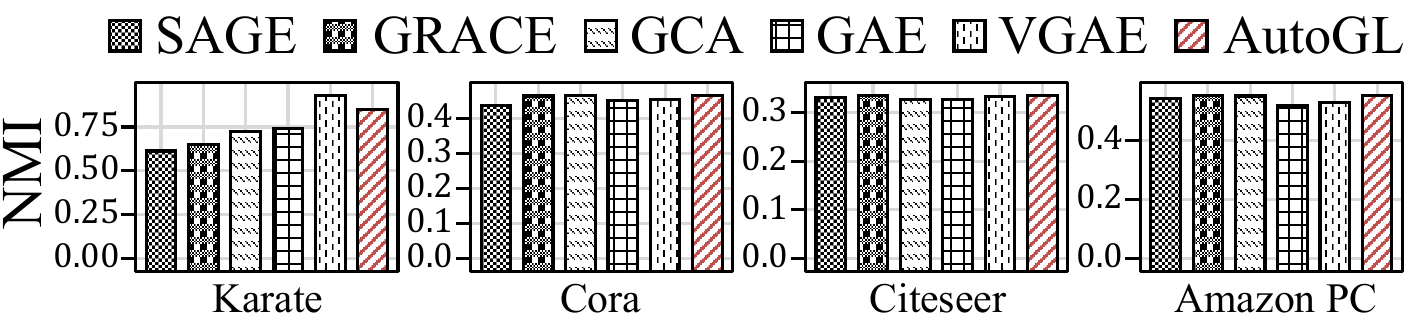}
        \vspace{-0.4cm}
        \subcaption{Variants of embedding}
    \end{subfigure}
    \caption{EQ7. Robustness}
    \label{fig:robustness}
    \vspace{-0.2cm}
\end{figure}

\spara{EQ7. Robustness.} Figure~\ref{fig:robustness} reports the NMI of {\MGA} under different similarity functions and embedding methods on the four small-scale datasets. We first compare three similarity functions: rescaled cosine (default), Pearson correlation, and sigmoid inner product. Since {\RAM} assumes non-negative affinities, we rescale Pearson correlation to $(1+\rho)/2$, analogous to the cosine rescaling. On Cora, Citeseer, and Amazon PC, the three variants differ by less than $0.01$ in NMI, indicating that {\MGA} is reasonably robust to the choice of similarity function. We further test five unsupervised embedding methods: SAGE~\cite{hamilton2017inductive}, GRACE~\cite{zhu2020deep}, GCA~\cite{zhu2021graph}, GAE~\cite{kipf2016variational}, and VGAE~\cite{kipf2016variational}. For each method, we use reported dataset-specific hyperparameters when available, and default settings otherwise. On Cora, Citeseer, and Amazon PC, all embeddings yield NMI within $0.035$ of AutoGL, suggesting robustness to the embedding source when embeddings are sufficiently well-conditioned. Karate shows a wider gap because default hyperparameters are less reliable on such a small graph. AutoGL avoids manual hyperparameter selection and consistently attains competitive NMI, supporting its use as the default embedding pipeline for {\MGA}.


\begin{figure}[h]
    \centering
    \includegraphics[width=0.99\linewidth]{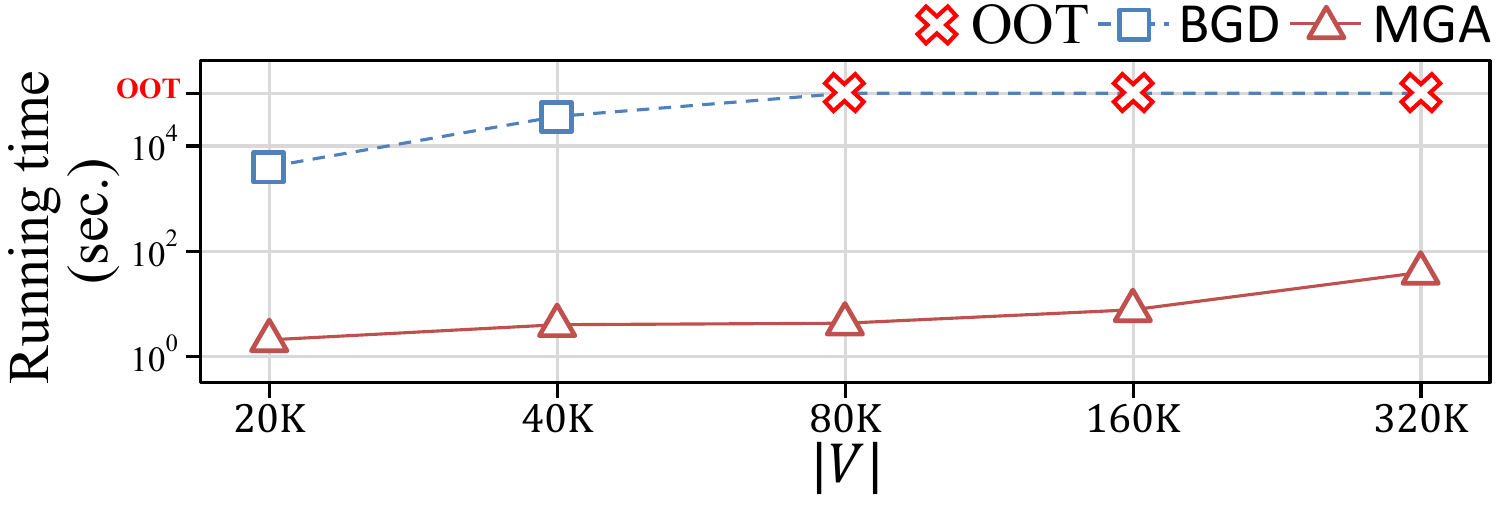}
    \vspace{-0.6cm}
    \caption{EQ8. Scalability}
    \label{fig:scalability}
    \vspace{-0.4cm}
\end{figure}

\spara{EQ8. Scalability.}
To evaluate scalability under controlled graph sizes, we generated synthetic attributed graphs using XMark~\cite{citraro2021x} with its default generation parameters, varying $|V|$ from $20\mathrm{K}$ to $320\mathrm{K}$ by doubling. Figure~\ref{fig:scalability} reports the total running time of {\BGD} and {\MGA}. {\BGD} grows steeply with graph size, taking over $10$ hours already at $|V|=40\mathrm{K}$ and exceeding the $24$-hour time limit beyond that. This behaviour is consistent with the repeated weighted edge betweenness recomputation required by {\BGD}, which incurs a super-linear cost and becomes impractical as the graph grows. In contrast, {\MGA} scales smoothly across the entire range: increasing $|V|$ by $16\times$ ($20\mathrm{K} \to 320\mathrm{K}$) increases the running time by only about $19\times$, showing an approximately linear trend with respect to the generated graph size. This trend is consistent with the multi-level design of {\MGA}, where graph contraction rapidly reduces the size of subsequent optimisation instances.

\section{RELATED WORK}\label{sec:related}
\spara{Modularity and non-standard null models.}
Classical modularity~\cite{newman2004finding} compares observed connectivity with a degree-based null model. Its variants often modify the observed signal or null model using spatial distance~\cite{expert2011uncovering,shakarian2013mining,hannigan2013mining,sarzynska2016null}, hypergraph interactions~\cite{kaminski2019clustering}, or motif-enriched connectivity~\cite{xiao2024higher}. {\RAM} is closest to null-model variants, but assigns a different role to distance. While spatial modularity uses external geographic coordinates to capture spatial dependence, compactness, or dispersion, {\RAM} uses learned representation geometry to calibrate the expected affinity level. The factor $g(d_{uv})$ appears only in the expected term, whereas the observed term remains graph-supported as $A_{uv}\phi(z_u,z_v)$. Thus, {\RAM} is not a direct spatial-modularity analogue, but a representation-aware null model for preserving observed-minus-expected semantics.

\spara{Attribute-aware and representation-based community detection.}
Attribute-aware community detection combines topology with node attributes, either by constructing structural--attribute similarities and partitioning an augmented graph~\cite{zhou2009graph,wu2016community,alinezhad2020community,falih2017anca,berahmand2022new}, or by jointly factorising structural and attribute matrices to infer community assignments~\cite{wang2016semantic,li2017detecting,huang2021community,yang2018enhanced}. Recent representation-based methods further integrate embeddings with community detection by reconstructing the modularity matrix~\cite{yang2016modularity}, adding modularity-aware regularisation to graph autoencoders~\cite{salha2022modularity}, directly optimising modularity through GNNs~\cite{tsitsulin2023graph}, or clustering learned embeddings under modularity-related objectives~\cite{bhowmick2024dgcluster}. These methods use attributes or representations as evidence for assignment, training, or clustering. {\RAM} follows a different paradigm: it treats node representations as input and asks how modularity should be recalibrated when representation-induced affinity is used as the observed signal. Representations therefore do not define a direct clustering space; they calibrate the expected affinity in the null model, so graph-supported affinity is rewarded only when it exceeds what is expected at the corresponding representation distance.
\section{CONCLUSION}
We introduced {\RAM}, an objective for graphs with node representations that preserves observed-minus-expected semantics of classical modularity. The key idea is to calibrate the expected term using both degree heterogeneity and representation distance, evaluating representation-induced affinity against a distance-aware null baseline. We showed that {\RAM} recovers classical modularity as a special case, inherits its hardness, and remains stable under bounded calibration error. Experiments and ablations confirm that distance calibration improves community detection quality and that both calibration components are essential, establishing {\RAM} as an objective-level framework for representation-aware community detection beyond direct embedding-space clustering.


\section*{Acknowledgment}
This work was supported by the Institute of Information \& Communications Technology Planning \& Evaluation(IITP) grant funded by the Korea government(MSIT) (No.RS-2020-II201336, Artificial Intelligence Graduate School Program(UNIST)) and (No. RS-2026-25527734, [Hyper-scale Industrial AI Research Support (R\&D) Program], Development of industry-specified multimodal hyperscale foundation model based technologies for manufacturing, predictive maintenance, and process optimization), and the National Research Foundation of Korea(NRF) grant funded by the Korea government(MSIT) (RS-2025-00523578, RS-2026-25468971)

\spara{AI-generated content acknowledgement.}
Generative AI tools were employed solely for writing assistance and code optimization. Specifically, they were used to improve the grammatical flow of the manuscript and to enhance the readability and efficiency of the implementation code. All core concepts, methodologies, and experimental results were developed independently by the authors, who maintain full responsibility for the content and integrity of this work.

\bibliographystyle{IEEEtran}
\bibliography{sample-base}

\end{document}